\documentclass[english]{lipics-v2021}
\nolinenumbers

\usepackage[utf8]{inputenc}

\usepackage{microtype,hyperref,graphicx}
   \hypersetup{%
      breaklinks,%
      colorlinks=true,%
      urlcolor=[rgb]{0.25,0.0,0.0},%
      linkcolor=[rgb]{0.5,0.0,0.0},%
      citecolor=[rgb]{0,0.2,0.445},%
      filecolor=[rgb]{0,0,0.4},
      anchorcolor=[rgb]={0.0,0.1,0.2}%
   }

\usepackage[linesnumbered,boxed]{algorithm2e}


\newtheorem{subproblem}[theorem]{Subproblem}

\title{Faster Algorithms for Reverse Shortest Path in Unit-Disk Graphs and Related Geometric Optimization Problems:\\ Improving the Shrink-and-Bifurcate Technique}

\titlerunning{Faster Algorithms for Reverse Shortest Path in Unit-Disk Graphs}


\author{Timothy M. Chan}{Siebel School of Computing and Data Science, University of Illinois at Urbana-Champaign, USA}{tmc@illinois.edu}{https://orcid.org/0000-0002-8093-0675}{Work supported by NSF Grant CCF-2224271.}
\author{Zhengcheng Huang}{Siebel School of Computing and Data Science, University of Illinois at Urbana-Champaign, USA}{zh3@illinois.edu}{https://orcid.org/0009-0006-0974-8818}{}

\authorrunning{T.\,M. Chan and Z. Huang}

\ccsdesc[100]{Theory of computation~Computational geometry}
\keywords{Geometric optimization problems, parametric search, shortest path, disk graphs, Fr\'echet distance, visibility, distance selection, randomized algorithms}

\newcommand{\OO}{\tilde{O}}
\newcommand{\OOO}{O^*}
\newcommand{\dd}{\tilde{d}}

\newcommand{\ceil}[1]{\lceil#1\rceil}
\newcommand{\etal}{\emph{et al.}}

\newcommand{\kkappa}{\delta}

\def\Frechet{Fr\'{e}chet}
\def\Jejcic{Jej\v{c}i\v{c}}

\begin{document}

\maketitle

\begin{abstract}
In a series of papers, Avraham, Filtser, Kaplan, Katz, and Sharir (SoCG'14), Kaplan, Katz, Saban, and Sharir (ESA'23), and Katz, Saban, and Sharir (ESA'24) studied a class of geometric optimization problems---including reverse shortest path in unweighted and weighted unit-disk graphs, discrete Fr\'{e}chet distance with one-sided shortcuts, and reverse shortest path in visibility graphs on 1.5-dimensional terrains---for which standard parametric search does not work well due to a lack of efficient parallel algorithms for the corresponding decision problems. The best currently known algorithms for all the above problems run in $\OOO(n^{6/5})=\OOO(n^{1.2})$ time (ignoring subpolynomial factors), and they were obtained using a technique called \emph{shrink-and-bifurcate}. We improve the running time to $\OO(n^{8/7}) \approx O(n^{1.143})$ for these problems. Furthermore, specifically for reverse shortest path in unweighted unit-disk graphs, we improve the running time further to $\OO(n^{9/8})=\OO(n^{1.125})$.

\end{abstract}

\section{Introduction}

\emph{Parametric search} is one of the most well-known and powerful techniques for solving geometric optimization problems~\cite{AgarwalS98}. First introduced by Megiddo~\cite{Megiddo83} in the 80s, the technique has since found countless applications in computational geometry and beyond. To search for the optimal value $r^*$ of an optimization problem, we first solve the decision problem: given an input value $r$, decide whether $r^*\le r$.  The idea is to simulate an algorithm $\mathcal{A}$ for the decision problem\footnote{
The algorithm $\mathcal{A}$ that we are simulating need not solve the decision problem, although it often is the case, so long as the optimal value $r^*$ is one of its ``critical values''.
} with the input value being the unknown $r^*$ itself; whenever $\mathcal{A}$ makes a comparison with the unknown $r^*$, we resolve the comparison by
calling the decision algorithm. Megiddo showed that if $\mathcal{A}$ is parallelizable with polylogarithmic number of rounds, then the simulation can be done efficiently with only a polylogarithmic factor increase in the total running time.

Over the years, a number of refinements and alternatives to parametric search have been proposed (e.g., Cole's trick~\cite{Cole87}, matrix searching~\cite{FredericksonJ82,FredericksonJ84}, expander-based approaches~\cite{KatzS97}, Chan's randomized optimization technique~\cite{Chan99} and implicit linear programming technique~\cite{Chan04,ChanHJ22}), but these alternatives are more specialized and are mainly about lowering the extra polylogarithmic factors in the running time.

In the last couple of years, Avraham, Filtser, Kaplan, Katz, and Sharir~\cite{AvrahamFKKS15}, Wang and Zhao~\cite{WangZ23}, Kaplan, Katz, Saban, and Sharir~\cite{KaplanKSS23},  Katz, Saban, and Sharir~\cite{KatzSS24}, and Agarwal, Kaplan, Katz, and Sharir~\cite{AgarwalKKS24} have identified and explored a natural class of geometric optimization problems (listed below) where parametric search does not do too well. For these problems, efficient near-linear-time decision algorithms exist, but they do not seem efficiently parallelizable (the decision version of these problems are related to single-source shortest paths, but breadth-first search (BFS) and Dijkstra's algorithm are inherently sequential). For many of the problems in this class, the search spaces are actually Euclidean distances defined by pairs of input points, and so parametric search could be replaced by binary search using an oracle for \emph{distance selection}, but unfortunately, known distance selection algorithms requires near $O(n^{4/3})$ time in the plane~\cite{AgarwalASS93,KatzS97,WangZ23Improved,ChanZ24}.

To overcome these issues, Avraham \etal~\cite{AvrahamFKKS15} introduced a new clever variant of the parametric search technique they termed \emph{shrink-and-bifurcate}, which was subsequently adopted by Kaplan \etal~\cite{KaplanKSS23}, Katz \etal~\cite{KatzSS24}, and Agarwal \etal~\cite{AgarwalKKS24}. The technique is able to solve most of these problems  with an unusual running time between near-linear and $n^{4/3}$: namely, $O^*(n^{6/5})$.\footnote{
In this paper, we use the $O^*$ and $\OO$ notation to hide $n^{o(1)}$ and $(\log n)^{O(1)}$ factors respectively.
}

Although Avraham \etal's improvement may seem incremental, this $O^*(n^{6/5})$ bound has remained stagnant (Avraham \etal's paper appeared over 10 years ago), despite more applications and more appearances of this bound in recent papers~\cite{KaplanKSS23,KatzSS24,AgarwalKKS24}; the ideal goal of obtaining near-linear time at the moment seems far out of reach. In this work, we want to understand this class of problems better, and to see if this mysterious exponent $6/5$ can be improved.

In the following, we state the specific problems considered in this class and review their background.

\subparagraph*{Reverse shortest path in (unweighted) unit-disk graphs.}
We start with the simplest of these problems: {\em reverse shortest path} for unit-disk graphs. For every set of points $P$ in the plane and positive real number $r$, let $G_r(P)$ denote the graph with vertex set $P$, where we put an edge between two points iff their Euclidean distance is at most $r$; equivalently, $G_r(P)$ is the intersection graph of the disks with centers $P$ and radius $r/2$.  Note that $G_r(P)$
is a {\em unit-disk graph} (since we can rescale the radius to 1).
In the reverse shortest path problem, one is given a set of points $P$ in the plane, two vertices $s, t \in P$, and a positive integer $\lambda$, and is asked to compute the minimum real number $r^*$ such that $G_{r^*}(P)$ contains an $s$-to-$t$ path with at most $\lambda$ edges. 

This problem is quite natural and can be viewed equivalently as finding the minimum bottleneck path from $s$ to $t$ that
uses at most $\lambda$ links (or ``hops''), where minimizing \emph{bottleneck} refers to minimizing the longest edge in the path.  (As is well known, computing minimum bottleneck paths without constraints on the number of links reduces to minimum spanning trees; {\em e.g.}, see~\cite{Hu61}. It is reasonable to add restrictions on the number of links.)

The decision version of this problem takes an extra parameter $r$, and asks one to determine whether $G_r(P)$ contains an $s$-to-$t$-path with at most $\lambda$ edges. The decision problem (or computing the minimum-link $s$-to-$t$ path in a unit disk graph) can be solved in $O(n \log n)$ time, as shown by Cabello and \Jejcic{}~\cite{CabelloJ15} using the Delaunay triangulation, or alternatively by Chan and Skrepetos~\cite{ChanS16} using a grid-based approach, or 
even earlier by Efrat, Itai, and Katz~\cite{EfratIK01} (who solved a very similar problem as a subroutine for computing bichromatic maximum matchings).
Now, since $r^*$ is always the Euclidean distance between two input points, it was immediately noted by Cabello and \Jejcic{} that reverse shortest path can be solved in $\OO(n^{4/3})$ time by binary search, using an $\OO(n^{4/3})$-time distance selection algorithm  (as we have mentioned earlier) to guide the search.

The first algorithm to beat $n^{4/3}$ for reverse shortest path was discovered by Wang and Zhao~\cite{WangZ23}, using a grid-based approach. Their algorithm runs in $\OO(n^{5/4})$ time.  Subsequently, Kaplan \etal~\cite{KaplanKSS23} presented a different, faster randomized algorithm
running in $\OOO(n^{6/5})$ time, by applying the aforementioned
shrink-and-bifurcate technique.




\subparagraph*{Reverse shortest path in weighted unit-disk graphs.} In a \emph{weighted} unit-disk graph, each edge is weighted by the Euclidean distance between the centers of the two disks. The reverse shortest path problem can be defined in a similar way, except that $\lambda$ is now a cap on the sum of the weights along the $s$-to-$t$ path.

Wang and Xue~\cite{WangX20} gave an $O(n\log^2n)$-time algorithm for the decision problem in the weighted case.  Wang and Zhao~\cite{WangZ23} extended their $\OO(n^{5/4})$-time algorithm for reverse shortest path to weighted unit-disk graphs (with a slightly larger polylogarithmic factor).  Kaplan \etal~\cite{KaplanKSS23}
similarly extended their $\OOO(n^{6/5})$-time randomized algorithm to the weighted case, again via shrink-and-bifurcate.

\subparagraph*{Reverse shortest path in unweighted and weighted disk graphs.} The reverse shortest path problem can be generalized to
intersection graphs of disks with arbitrary radii.  The input is a set $S$ of disks in the plane, and $G_r(S)$ is defined as the graph with $S$ as vertices, where we put an edge between two disks iff their distance is at most $r$ (this is sometimes referred to as a \emph{proximity graph}); equivalently, $G_r(S)$
is the intersection graph of the disks with the same centers as $S$ but radii increased additively by $r/2$.
In the weighted case, each edge is weighted by the Euclidean distance between the centers of the two disks (or alternatively, the Euclidean distance of the centers minus the sum of the two radii).

As shown by Kaplan \etal~\cite{KaplanKSS23}, the decision problem can be solved in $\OO(n)$ time by a careful implementation of BFS (in the unweighted case) or Dijkstra's algorithm (in the weighted case), using known dynamic geometric data structures~\cite{KaplanMRSS20} (which have large hidden polylogarithmic factors).  

Wang and Zhao's grid-based approach to
reverse shortest path does not extend to arbitrary disk graphs.
However, the shrink-and-bifurcate approach is still applicable, and yields an
$\OOO(n^{5/4})$-time randomized algorithm as shown by Kaplan \etal~\cite{KaplanKSS23}.

\subparagraph*{Reverse shortest path in segment proximity graphs.}
Another generalization is to the case of line segments.
Given a set $S$ of \emph{disjoint} line segments in the plane, define the segment proximity graph $G_r(S)$ to be the graph with $S$ as vertices, where we put an edge between two segments iff their distance is at most $r$.  The  reverse shortest path problem for (unweighted) line segments can then be defined in the same way.

This problem was first considered by Agarwal \etal~\cite{AgarwalKS24}, who gave a randomized $\OOO(n^{4/3})$-time algorithm.  Later,
Agarwal \etal~\cite{AgarwalKKS24} solved the corresponding decision problem in $O(n\log^2n)$-time.  They then applied the shrink-and-bifurcate technique to obtain an improved randomized algorithm for the optimization problem running in
$\OOO(n^{6/5})$ time, assuming that the spread (the maximum-to-minimum distance ratio) is polynomially bounded.





\subparagraph*{Discrete \Frechet{} distance with one-sided shortcuts.} The \emph{discrete \Frechet{} distance} is a popular measure of closeness between two polygonal chains. Given polygonal chains $P=\langle p_1,\ldots,p_{|P|}\rangle$ and $Q=\langle q_1,\ldots,q_{|Q|}\rangle$ with $|P|+|Q|=n$, consider two frogs, a $P$-frog and a $Q$-frog, starting at points $p_1$ and $q_1$ respectively. In each move, one frog leap from its current vertex to the next vertex on their respective polygonal chain, while the other frog stays at its current vertex. The discrete \Frechet{} distance between $P$ and $Q$ is defined as the minimum distance $r^*$ for which the two frogs can arrive at $p_{|P|}$ and $q_{|Q|}$, such that the Euclidean distance between the two frogs is always at most $r^*$ before and after every leap.

The discrete \Frechet{} distance with \emph{one-sided shortcuts} is defined similarly, except that we allow one frog, say the $P$-frog, to leap to any vertex ahead during each move. In other words, the one-sided discrete \Frechet{} distance between $P$ and $Q$ is the minimum discrete \Frechet{} distance between $P'$ and $Q$, over all subsequences $P'$ of $P$ that start at $p_1$ and end at $p_{|P|}$.
The problem of computing the discrete \Frechet{} distance with one-sided shortcuts is what first prompted Avraham \etal~\cite{AvrahamFKKS15} to introduce their shrink-and-bifurcate technique.
They noted a linear-time algorithm for the decision problem and obtained an $O^*(n^{6/5})$-time randomized algorithm for the optimization problem.

\subparagraph*{Reverse shortest path in visibility graphs over 1.5-dimensional terrains.} A \emph{terrain} $T$ in 1.5 dimensions refers to an $x$-monotone polygonal chain in the plane. Two points $p, q$ on or above the terrain are said to {\em see} each other if the line segment $\overline{pq}$ does not intersect $T$. Given a  1.5-dimensional terrain $T$ and a set of points $P$ on or above $T$ with $|P|+|T|=n$, the visibility graph $G(P, T)$ is a graph with vertices $P$ and edges indicating which pairs of points see each other.

In the reverse shortest path problem over a terrain $T$, we are given a set $P$ of points on $T$, and we want to erect a tower at each point in $P$. In particular, given a terrain $T$, a set of points $P$ on $T$, a pair of points $s, t \in P$, and a positive integer $\lambda$, we are asked to find the minimum real number $h^*$ such that the visibility graph $G(P(h^*), T)$ contains an $s$-$t$ path with at most $\lambda$ edges, where $P(h^*)$ is the tips of the towers of height $h^*$ erected at the points of $P$.

This problem was first introduced by Agarwal \etal~\cite{AgarwalKS24}, who gave a randomized $\OOO(n^{4/3})$-time  algorithm.
Katz \etal~\cite{KatzSS24} solved the decision problem in $\OO(n)$ time by a careful implementation of BFS on the visibility graph, with clever uses of range searching data structures.  Katz \etal\ then applied the shrink-and-bifurcate technique to obtain a randomized $\OOO(n^{6/5})$-time algorithm for the optimization problem.

\subsection{The Shrink-and-Bifurcate Technique}


As noted, the current fastest algorithms for all of the above problems are obtained via the shrink-and-bifurcate technique introduced by Avraham \etal~\cite{AvrahamFKKS15}.
The setting requires that critical values (values that the decision algorithm may compare the input value against) are distances defined by pairs of input objects for some distance function.
The 
strategy consists of two stages, {\em interval shrinking} and {\em bifurcation}.

\begin{enumerate}
    \item  The interval shrinking stage is a relaxed version of binary search---instead of insisting on finding the answer $r^*$ immediately, we seek an interval $(r^-, r^+]$ that contains $r^*$ and that contains at most $L$ critical values for a given parameter $L$. 
In the case of Euclidean distances of points in the plane,
Avraham \etal\ showed that this stage can be done in $\OOO(n^{4/3}/L^{1/3}+D(n))$ time (using randomization), where $D(n)$ is the running time of the decision algorithm.  This is potentially better than the $\OO(n^{4/3})$ time bound for distance selection.

\item
The  bifurcation stage then uses this interval $(r^-, r^+]$ to simulate the decision algorithm at the unknown $r^*$ more efficiently than standard parametric search when the decision algorithm is not parallelizable. Avraham et al.\  showed that this simulation can be done in $\OO(L^{1/2}D(n))$ time.
\end{enumerate}

When $D(n) = \OO(n)$, choosing $L = n^{2/5}$ to balance the costs of the two stages gives an overall running time of $\OOO(n^{6/5})$. 

Some of the applications use distance functions other than Euclidean distances of points in the plane, but the $\OOO(n^{4/3}/L^{1/3}+D(n))$ bound for interval shrinking is still applicable. 
An exception is the reverse shortest path problem for general disk graphs.  Here, the corresponding distance selection problem requires $\OOO(n^{3/2})$ time, and the interval shrinking subproblem is solved in $\OOO(n^{3/2}/L^{1/2}+D(n))$ time instead. 
When $D(n) = \OO(n)$, choosing $L = n^{1/2}$ to balance the costs of the two stages gives an overall running time of $\OOO(n^{5/4})$ instead.



\subsection{New Results}

The new contributions in this paper are twofold:

\begin{enumerate}
    \item We improve all the previous $\OOO(n^{6/5})$ time bounds---for reverse shortest path in unit-disk graphs, weighted unit-disk graphs, segment proximity graphs, visibility graphs over 1.5-dimensional terrains, and discrete \Frechet{} distance with one-sided shortcuts---to $\OO(n^{8/7})$.
    (In particular, for discrete \Frechet{} distance with one-sided shortcuts, this is the first improvement on the exponent in 10 years.)

    For reverse shortest path in arbitrary disk graphs (unweighted or weighted), we improve the previous $\OOO(n^{5/4})$ time bound to $\OOO(n^{6/5})$.
    
    Although the improvements are incremental, they are all obtained from one simple idea for improving interval shrinking (we keep the bifurcation part unchanged)---the simplicity and generality of our idea make it easy to apply to existing (and potentially future) applications of the shrink-and-bifurcate approach, and are the main takeaways of the paper. More precisely, we show how to lower the $\OOO(n^{4/3}/L^{1/3}+D(n))$ time bound for interval shrinking subproblem to $\OO(n^{4/3}/L^{2/3}+D(n))$ (or in the arbitrary disk graph case, lower $\OOO(n^{3/2}/L^{1/2}+D(n))$ to $\OOO(n^{3/2}/L^{3/4}+D(n))$).  The original interval shrinking method by Avraham \etal\ needs to modify range searching techniques for distance selection, constructing ``partial biclique covers'' and then applying random sampling. Our new method, in contrast, can be described in just a few lines, and simply uses a known distance selection algorithm \emph{as a black box} on random subsets (see Section~\ref{sec:interval-shrinking}).

   \item Next, we focus more closely on the simplest of this class of problems, namely, reverse shortest path in (unweighted) unit-disk graphs. For this specific problem, we are able to further improve the running time to $\OO(n^{9/8})$. The improvement of the exponent from $6/5=1.2$ to $9/8=1.125$ is a bigger increment (for example, comparable in magnitude to Wang and Zhao's original improvement~\cite{WangZ23} from $4/3\approx 1.333$ to $5/4=1.25$, or 
   Kaplan \etal's improvement~\cite{KaplanKSS23} from $5/4=1.25$ to $6/5=1.2$).

    Interestingly, we obtain our result by combining the shrink-and-bifurcate technique with Wang and Zhao's original grid-based approach!  Although it may be natural to consider combining the two approaches, how to do so is not obvious at all.  We need to put together several nontrivial technical ideas---this is the most intricate part of the paper (see Section~\ref{sec:rsp-uudg}). 
\end{enumerate}




\section{Interval Shrinking}
\label{sec:interval-shrinking}

In this section, we formally define the {\em interval shrinking} subproblem in an abstract, self-contained setting, and then describe our new randomized method for this subproblem.

\begin{subproblem}[Interval shrinking]
    \label{prb:interval-shrinking}
    Suppose we have an unknown value $r^*$. Suppose there is a \emph{decision oracle} that can decide whether $r^*\le r$ for an input value $r$ in $D(n)$ time where $D(n)\ge n$.

    Let $\kkappa(\cdot,\cdot)$ be a  function over pairs of objects.  Suppose there is a \emph{selection oracle} that, given two sets $P$ and $Q$ of objects and a number $k$, can compute the $k$-th smallest value in $\kkappa(P, Q) := \{ \kkappa(p, q) \mid (p, q) \in P \times Q \}$ in $K(|P|,|Q|)=\OO(|P|^{\alpha}|Q|^{\alpha}+|P|+|Q|)$ time for some constant $\alpha < 1$.
    
    Given two sets $P$ and $Q$ of objects with $|P|+|Q|=n$ and a number $L$, we want to compute an interval $(r^-, r^+]$ that contains $r^*$ and that contains at most $\OO(L)$ values in $\kkappa(P, Q)$.
\end{subproblem}

If $\kkappa$ is the Euclidean distance function for points in the plane (which is the case for the applications to reverse shortest path in unit-disk graphs or discrete \Frechet{} distance with one-sided shortcuts),
known distance selection algorithms~\cite{AgarwalASS93,KatzS97,WangZ23Improved,ChanZ24} 
between an $m$-point set and an $n$-point set run in
$K(m,n)=\OO(m^{2/3} n^{2/3} + m + n)$ time, and thus we have $\alpha=2/3$.  In this planar Euclidean case, Avraham \etal~\cite{AvrahamFKKS15} solved
the above interval shrinking problem in $\OOO(n^{4/3}/L^{1/3}+D(n))$ time with high probability.  We will describe an improved method.

It suffices to describe how to find an $r^+>r^*$ such that
$[r^*,r^+]$ contains at most $\OO(L)$ values in $\kkappa(P,Q)$, since finding $r^-$ is analogous.
Also, it suffices to design an algorithm which is correct with probability $\Omega(1)$, since we can get an algorithm which is correct w.h.p.\footnote{
 With high probability, i.e., probability $1-O(1/n^c)$ for an arbitrarily large constant $c$.
 }\ 
 by repeating for logarithmically many trials and returning the minimum $r^+$ found.


\subsection{First Version}

Our first interval shrinking algorithm is simple, as presented in Algorithm~\ref{alg:interval-shrinking-1}.

\begin{algorithm}
    \caption{Basic interval shrinking\vspace{0.2em}}
    \label{alg:interval-shrinking-1}
    Take random subsets $\widehat{P} \subseteq P$ and $\widehat{Q} \subseteq Q$ where each element is chosen independently with probability $\frac1{\sqrt{L}}$

    \Return the successor $r^+$ of $r^*$ in $\kkappa(\widehat{P},Q)\cup \kkappa(P,\widehat{Q})$,  which can be computed by binary searches in $\kkappa(\widehat{P},Q)$ and $\kkappa(P,\widehat{Q})$ using the selection and decision oracles
\end{algorithm}

\subparagraph*{Correctness.} 
Let $E \subseteq P \times Q$ be the set of $L$ pairs that have the $L$ smallest $\kkappa(\cdot,\cdot)$ values after $r^*$.
Think of $E$ as a bipartite graph between $P$ and $Q$. Then $E$ has at least $\sqrt{L}$ non-isolated vertices in $P$ or at least $\sqrt{L}$ non-isolated vertices in $Q$ (because otherwise, $E$ would have fewer than $L$ edges). W.l.o.g., assume the former.  Then $\widehat{P} \times Q$  hits $E$  with probability at least $1-(1-\frac1{\sqrt{L}})^{\sqrt{L}}=\Omega(1)$.  This implies that there are at most $L$ values in $\kkappa(\widehat{P},Q)$ between $r^*$ and $r^+$ with probability $\Omega(1)$.


\subparagraph*{Running time.} 

Note that $|\widehat{P}|,|\widehat{Q}|=\OO(\frac{n}{\sqrt{L}})$ w.h.p.
Using the selection and decision oracles, w.h.p., the binary searches can be done in time
\begin{align*}
    \OO\left(
        \left(\frac{n}{\sqrt{L}}\right)^{\alpha} n^{\alpha} \:+\: n^\alpha \left(\frac{n}{\sqrt{L}}\right)^{\alpha} \:+\:  D(n)
    \right)
    \ =\ \OO(n^{2\alpha}/L^{\alpha/2} + D(n)).
\end{align*}

For planar Euclidean distances with $\alpha=2/3$, 
we have thus recovered the $\OOO(n^{4/3}/L^{1/3}+D(n))$ result of Avraham {\em et al.}~\cite{AvrahamFKKS15}. However, our algorithm is arguably simpler, because, unlike Avraham {\em et al.}, we invoke distance selection as a black box, without needing to modify the internal mechanisms of distance selection. 

We remark that the idea of randomly sampling \emph{pairs} of input objects is more common and has been used before in the context of slope selection or distance selection~\cite{Matousek91a,DillencourtMN92,Chan01} (and in Avraham \etal's previous method~\cite{AvrahamFKKS15}).  But here we are sampling the input objects themselves (which in some sense is even simpler, though the analysis is slightly trickier).

\subsection{Improved Version}

Upon closer examination, one may notice that the basic algorithm has room for improvement. In particular, if  $E$ has $\Theta(\sqrt{L})$ non-isolated vertices in $P$ and $\Theta(\sqrt{L})$ non-isolated vertices in $Q$, then clearly our samples $\widehat{P} \times Q$ and $P \times \widehat{Q}$ are wasteful. Indeed, we can eliminate such redundancy by sampling in $P$ and $Q$ simultaneously with different rates. We present the details in Algorithm~\ref{alg:interval-shrinking-2}.

\begin{algorithm}
    \caption{Improved interval shrinking\vspace{0.2em}}
    \label{alg:interval-shrinking-2}

    \For{$i = 1, \ldots, \log n$}
    {
        Take a random subset $\widehat{P}_i \subseteq P$
               where each element is chosen independently with probability $\frac1{\ceil{L/2^i}}$ \\
        Take a random subset $\widehat{Q}_i \subseteq Q$
               where each element is chosen independently with probability $\frac1{2^i}$ \\
    }
    \Return the successor $r^+$ of $r^*$ in $\bigcup_{i=1}^{\log n} \kkappa(\widehat{P}_i,\widehat{Q}_i)$, which can be computed by binary search in each $\kkappa(\widehat{P}_i,\widehat{Q}_i)$ using the selection and decision oracles
    
\end{algorithm}

\subparagraph*{Correctness.} 
Let $E \subseteq P \times Q$ be the set of $L\log n$ pairs that have the $L\log n$ smallest $\kkappa(\cdot,\cdot)$ values after $r^*$.
Consider the bipartite graph $H = (P \sqcup Q, E)$. Let $M_i \subseteq P$ be the vertices with degree in the range $[2^{i-1}, 2^{i})$ in $H$. Since $\sum_{i=1}^{\log n} 2^i |M_i|\ge L\log n$, there must exist an index $i_0$ for which $|M_{i_0}|\ge \ceil{L/2^{i_0}}$. Observe that $\widehat{P}_{i_0} \times \widehat{Q}_{i_0}$ hits $E$ if 
\begin{enumerate}
    \item[(i)] $\widehat{P}_{i_0}$ hits $M_{i_0}$, and
    \item[(ii)] for some $p\in \widehat{P}_{i_0}\cap M_{i_0}$,
    $\widehat{Q}_{i_0}$ hits the neighborhood of $p$ in $E$.
\end{enumerate}

Now, (i) holds with probability at least $1-(1-\frac1{\ceil{L/2^{i_0}}})^{\ceil{L/2^{i_0}}}=\Omega(1)$.  For a fixed $p\in \widehat{P}_{i_0}\cap M_{i_0}$, the probability that
    $\widehat{Q}_{i_0}$ contains at least one neighbor of $p$ is at least $1-(1-\frac1{2^{i_0}})^{2^{i_0-1}}=\Omega(1)$.
Thus, conditioned on (i), the probability of (ii) is $\Omega(1)$.
Hence,  $\widehat{P}_{i_0} \times \widehat{Q}_{i_0}$ hits $E$ with probability $\Omega(1)$. This implies that there are at most $L\log n$ values in $\kkappa(P,Q)$ between $r^*$ and $r^+$ with probability $\Omega(1)$.


\subparagraph*{Running time.} 
Note that $|\widehat{P}_i|=\OO(\frac{n}{\ceil{L/2^i}})$
and $|\widehat{Q}_i|=\OO(\frac{n}{2^i})$ w.h.p.
Using the selection and decision oracles, w.h.p.\ the binary searches can be done in time
\begin{align*}
    \OO\left(\sum_{i=1}^{\log n}
       \left( \left(\frac{n}{\ceil{L/2^i}}\right)^{\alpha}
        \left(\frac{n}{2^i}\right)^{\alpha} \:+\: D(n) \right)
    \right)
    \ =\ \OO(n^{2\alpha}/L^{\alpha} + D(n)).
\end{align*}


\begin{theorem}
    \label{thm:interval-shrinking}
    Subproblem~\ref{prb:interval-shrinking} (interval shrinking) can be solved in $\OO(n^{2\alpha}/L^{\alpha} + D(n))$ time w.h.p.
\end{theorem}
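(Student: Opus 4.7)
The plan is to use Algorithm~\ref{alg:interval-shrinking-2} (the improved interval shrinking procedure) and boost its constant success probability to w.h.p.\ by running $O(\log n)$ independent copies and returning the smallest $r^+$ found. Two things have to be checked: correctness of a single run with probability $\Omega(1)$, and the advertised running time.

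For correctness, I would let $E$ be the set of the $L\log n$ pairs in $P \times Q$ with smallest $\kkappa$-value strictly above $r^*$, viewed as a bipartite graph on $P \sqcup Q$, and bucket the vertices of $P$ by degree in powers of two: $M_i = \{p \in P : \deg_E(p) \in [2^{i-1}, 2^i)\}$. From the degree-sum identity $\sum_i 2^i |M_i| \ge |E| = L\log n$, pigeonhole produces a scale $i_0$ with $|M_{i_0}| \ge \ceil{L/2^{i_0}}$. The key point is that the sampling rates $1/\ceil{L/2^{i_0}}$ in $P$ and $1/2^{i_0}$ in $Q$ are precisely calibrated so that (i) $\widehat{P}_{i_0}$ hits $M_{i_0}$ with probability $\Omega(1)$, and (ii) conditioned on (i), each fixed hit point $p \in \widehat{P}_{i_0} \cap M_{i_0}$ has its $\Theta(2^{i_0})$-sized $E$-neighborhood hit by $\widehat{Q}_{i_0}$ with probability $\Omega(1)$. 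Multiplying these two $\Omega(1)$ probabilities (using the independence of $\widehat{P}_{i_0}$ and $\widehat{Q}_{i_0}$) shows $\widehat{P}_{i_0} \times \widehat{Q}_{i_0}$ meets $E$ with probability $\Omega(1)$; so the binary search at scale $i_0$ returns an $r^+$ for which $[r^*,r^+]$ contains at most $L\log n = \OO(L)$ critical values.

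For the running time, Chernoff bounds give $|\widehat{P}_i| = \OO(n \cdot 2^i/L)$ and $|\widehat{Q}_i| = \OO(n/2^i)$ w.h.p., and each binary search calls the selection oracle and the decision oracle $O(\log n)$ times, at a per-level cost of $\OO\bigl((n/\ceil{L/2^i})^\alpha (n/2^i)^\alpha + D(n)\bigr)$ (the $m+n$ overhead in the selection oracle is absorbed since $D(n)\ge n$). The crucial observation is that the asymmetric rates make the product $(n/\ceil{L/2^i})^\alpha (n/2^i)^\alpha$ collapse to $O((n^2/L)^\alpha)$ independently of $i$, so summing over the $\log n$ scales yields the stated $\OO(n^{2\alpha}/L^\alpha + D(n))$ bound.

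The main obstacle is justifying why the asymmetric, multi-scale sampling beats the symmetric $1/\sqrt{L}$ rate of Algorithm~\ref{alg:interval-shrinking-1}: the improvement depends on locating the right scale $i_0$ via the pigeonhole on the degree sequence of $E$, and on the fact that independence between $\widehat{P}_{i_0}$ and $\widehat{Q}_{i_0}$ lets us couple the two $\Omega(1)$ events (i) and (ii) cleanly without a union bound. The Chernoff tail bounds controlling $|\widehat{P}_i|,|\widehat{Q}_i|$ and the logarithmic boosting step are routine and I would not dwell on them.
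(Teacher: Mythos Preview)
Your proposal is correct and follows the paper's proof essentially line for line: the same degree-bucketing $M_i$ of the bipartite graph $E$, the same pigeonhole to locate the scale $i_0$, the same two-stage hitting argument for events (i) and (ii), the same Chernoff bounds on $|\widehat{P}_i|,|\widehat{Q}_i|$, and the same collapse of $(n/\ceil{L/2^i})^\alpha(n/2^i)^\alpha$ to $O((n^2/L)^\alpha)$ in the running-time sum. The logarithmic boosting to w.h.p.\ is also exactly what the paper does.
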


For example, for planar Euclidean distances, the bound is 
$\OO(n^{4/3}/L^{2/3} + D(n))$, which is noticeably better than $\OO(n^{4/3}/L^{1/3} + D(n))$.

\subsection{Combining with Bifurcation}

After interval shrinking, Avraham \etal~\cite{AvrahamFKKS15} devised a clever variant of parametric search via an idea called \emph{bifurcation} (also redescribed in more generality in Kaplan \etal's subsequent paper~\cite{KaplanKSS23}) which accomplishes the following:


\begin{lemma}[Bifurcation \cite{AvrahamFKKS15}]\label{lem:bifurcate}
    Suppose we have an unknown value $r^*$.  Suppose there is a \emph{decision oracle} that can decide whether $r^*\le r$ for an input value $r$ in $D(n)$ time.

    Suppose we have an algorithm $\mathcal{A}$ that has one input parameter and runs in no more than $O(D(n))$ time, and during the execution of the algorithm, it can only access its input parameter via binary comparisons. The values that $\mathcal{A}$ can compare its input parameter with are called the {\em critical values}.

    Lastly, suppose that we are also given an interval $(r^-, r^+]$ that is guaranteed to contain $r^*$ and that contains at most $\OO(L)$ critical values.
    
    Then we can simulate $\mathcal{A}$ on input $r^*$ in $\OO(L^{1/2} D(n))$ time.
\end{lemma}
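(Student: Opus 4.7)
The plan is to simulate $\mathcal{A}$ on the unknown value $r^*$ step by step while maintaining a sub-interval $(r', r''] \subseteq (r^-, r^+]$ that is guaranteed to contain $r^*$ (initially $(r^-, r^+]$). Any comparison that $\mathcal{A}$ makes against a critical value $c \notin (r', r'']$ can be resolved for free from the interval bounds, so decision-oracle effort is needed only for comparisons against critical values that lie inside the current sub-interval.

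The natural strategy is a two-shadow bifurcation: run a copy of $\mathcal{A}$ on the pessimistic input $r'$ in parallel with a copy on the optimistic input $r''$, advancing both in lockstep. As long as both copies take the same branch at a comparison, $\mathcal{A}(r^*)$ must take that branch too (because $r^* \in (r', r'']$), and no oracle call is needed. The first time the two copies diverge on some critical value $c$, we must have $c \in (r', r'']$; a single decision oracle call on $c$ determines the correct branch, we replace $(r', r'']$ with $(r', c]$ or $(c, r'']$, discard the incorrect shadow, restart its counterpart from $c$'s branching point, and continue.

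The main obstacle is capping the total number of decision-oracle calls at $\tilde{O}(L^{1/2})$ rather than the naive $\tilde{O}(L)$ that one obtains by charging a call to every divergence. The improvement I would pursue is to interleave the shadow strategy with periodic binary-search refinements: rather than always querying the oracle at the divergence point, occasionally query at the \emph{median} of the candidate critical values currently known to lie in $(r', r'']$, thereby halving that count in a single call. Balancing the $O(D(n))$ cost of (re-)running the shadow executions in each phase against the $\tilde{O}(L^{1/2})$ refinements needed to drive the in-interval candidate count down to $O(1)$ should yield the stated $\tilde{O}(L^{1/2} D(n))$ total time. The trickiest part of the proof will be maintaining a compact representation of the critical values inside the current sub-interval (so that the ``median'' query makes sense without enumerating all of them) and carefully amortizing the cost of restarting a shadow after each divergence, so that the shadow work across all phases collapses to $\tilde{O}(L^{1/2} D(n))$ rather than $\tilde{O}(L \cdot D(n))$.
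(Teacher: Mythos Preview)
First, note that the paper does not actually prove this lemma: it is quoted from Avraham \etal~\cite{AvrahamFKKS15} and used as a black box, with only the parenthetical hint that one should ``simulate multiple branches of $\mathcal{A}$ at the same time, until we have accumulated sufficiently many comparisons and will then resolve all of them as a batch using the decision oracle.'' So there is no in-paper proof to compare against beyond that one sentence.

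Your two-shadow plan departs from that hint and, as written, has a genuine gap. With exactly two deterministic shadows $\mathcal{A}(r')$ and $\mathcal{A}(r'')$ running in lockstep, each divergence exposes a single critical value $c\in(r',r'']$, and the simulation cannot proceed past that point without either calling the oracle on $c$ (the very cost you want to amortize away) or forking into additional branches. Your proposed remedy---periodically querying the median of the ``candidate critical values currently known to lie in $(r',r'']$''---does not help: under the two-shadow scheme the only in-interval critical values you ever learn are precisely the divergence points, and in your own description each one has already consumed an oracle call by the time you know it. There is no separate reservoir of unresolved in-interval candidates on which a median query could act, and the ``compact representation'' you flag as tricky simply does not exist here, because critical values are only revealed by executing~$\mathcal{A}$.

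The actual bifurcation argument takes the forking route you did not pursue: at every unresolved comparison inside $(r^-,r^+]$ it splits into \emph{both} outcomes, growing a tree of up to $\sqrt{L}$ simultaneous partial executions before spending any oracle calls. The $\Theta(\sqrt{L})$ critical values sitting at the internal nodes of that tree are then sorted and located relative to $r^*$ in one batch via binary search using $O(\log L)$ oracle calls, which collapses the tree to the single correct leaf and simultaneously ejects those $\Theta(\sqrt{L})$ values from the working interval. Repeating for $O(\sqrt{L})$ rounds exhausts all $\OO(L)$ in-interval critical values. The ingredient your plan is missing is exactly this many-way fork: it is what allows one to \emph{discover} a whole batch of in-interval critical values before paying for any of them.
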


Typically, in applications of the technique, the algorithm $\mathcal{A}$ we want to simulate is the decision algorithm itself; here, the simulation produces an interval containing $r^*$ such that all input values lying in the interval have the same computation path and thus the same answer as $r^*$  (namely, ``yes''), and so the left endpoint of this interval gives us precisely the value of $r^*$.

We will not redescribe the proof of the above lemma, as we will only use it as a black box. (The rough idea in bifurcation is to simulate multiple branches of $\mathcal{A}$ at the same time, until we have accumulated sufficiently many comparisons and will then resolve all of them as a batch using the decision oracle; comparisons with values outside the interval $(r^-,r^+]$ are free.)   Combining our improved interval shrinking method with bifurcation, we immediately obtain better results for all optimization problems stated in the introduction:

\begin{theorem}\label{thm:appl}
\
\begin{enumerate}
\item[(a)]
    The reverse shortest path problem in unweighted and weighted unit-disk graphs, segment proximity graphs (assuming polynomial spread), and visibility graphs over 1.5-dimensional terrains, and the discrete \Frechet{} distance problem with one-sided shortcuts, as defined in the introduction, can all be solved 
    in $\OO(n^{8/7})$ time w.h.p.\ by (Las Vegas) randomized algorithms.
\item[(b)]
    The reverse shortest path problem in unweighted and weighted disk graphs can be solved in $\OOO(n^{6/5})$ time w.h.p.\ by (Las Vegas) randomized algorithms.
\end{enumerate}
\end{theorem}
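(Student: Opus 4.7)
The plan is essentially to plug Theorem~\ref{thm:interval-shrinking} into Lemma~\ref{lem:bifurcate} problem by problem. For each of the listed problems, three ingredients must be identified: (i)~a near-linear-time decision algorithm $D(n)=\OO(n)$, (ii)~a selection oracle for the distance function against which the decision algorithm compares the unknown $r^*$, with the appropriate exponent $\alpha$, and (iii)~confirmation that the decision algorithm (or more precisely the algorithm $\mathcal{A}$ of Lemma~\ref{lem:bifurcate}) accesses $r^*$ only through binary comparisons with values of the form $\kkappa(p,q)$ for input objects $p,q$. Items (i) and (iii) are already supplied by the earlier papers cited in the introduction (Cabello--\Jejcic{}, Wang--Xue, Katz et al., Agarwal et al., Avraham et al., Kaplan et al.); item (ii) is supplied by the distance selection literature referenced after Subproblem~\ref{prb:interval-shrinking}. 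Nothing in the approach is new beyond swapping Avraham et al.'s interval-shrinking bound for our sharper Theorem~\ref{thm:interval-shrinking}.

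For part~(a), all five problems fall under the planar-Euclidean-style regime in which $\alpha=2/3$: the critical values are Euclidean distances between points in the plane (unit-disk graphs, discrete \Frechet{} with one-sided shortcuts), segment-to-segment distances (segment proximity graphs, assuming polynomial spread), or visibility-related distances defined by pairs of tower tips on a terrain (1.5-dimensional visibility reverse shortest path); in each case a selection oracle with $K(m,n)=\OO(m^{2/3}n^{2/3}+m+n)$ is available or was already built in the cited works. Theorem~\ref{thm:interval-shrinking} then gives interval shrinking in $\OO(n^{4/3}/L^{2/3}+n)$ time, after which Lemma~\ref{lem:bifurcate} simulates the decision algorithm in $\OO(L^{1/2}\cdot n)$ time. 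Balancing $n^{4/3}/L^{2/3}=L^{1/2}\cdot n$ gives $L=n^{2/7}$ and total running time $\OO(n^{8/7})$.

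For part~(b), reverse shortest path in arbitrary (unweighted or weighted) disk graphs, the appropriate distance is the disk-to-disk distance, whose selection requires $\OOO(n^{3/2})$ time; this corresponds to $\alpha=3/4$ in the notation of Subproblem~\ref{prb:interval-shrinking}. Kaplan et al.'s BFS/Dijkstra implementation gives $D(n)=\OO(n)$. Theorem~\ref{thm:interval-shrinking} yields interval shrinking in $\OOO(n^{3/2}/L^{3/4}+n)$ time, Lemma~\ref{lem:bifurcate} yields bifurcation in $\OO(L^{1/2}\cdot n)$ time, and balancing $n^{3/2}/L^{3/4}=L^{1/2}n$ gives $L=n^{2/5}$ and total running time $\OOO(n^{6/5})$. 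Finally, the Las Vegas guarantee follows because once the simulation of Lemma~\ref{lem:bifurcate} returns a candidate $r^*$, a single additional call to the decision oracle at $r^*$ and at its predecessor among the critical values (obtained as a by-product of the bifurcation) verifies the answer in $\OO(n)$ time, and otherwise we retry.

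The main conceptual obstacle is not the arithmetic but the bookkeeping in point~(iii): one must make sure, for each concrete problem, that the critical values of the decision algorithm we simulate are exactly the distances handled by the available selection oracle, so that the shrunk interval from Theorem~\ref{thm:interval-shrinking} really bounds the number of critical comparisons encountered during bifurcation. This alignment was already established by Avraham et al., Kaplan et al., Katz et al., and Agarwal et al.\ for their $\OOO(n^{6/5})$ and $\OOO(n^{5/4})$ results, so reusing their frameworks verbatim and only substituting the new interval-shrinking bound suffices; no reanalysis of the decision algorithms or of bifurcation itself is required.
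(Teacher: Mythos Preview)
Your proposal is correct and follows essentially the same approach as the paper: identify $D(n)=\OO(n)$ decision algorithms and the appropriate $\alpha$ from prior work, plug Theorem~\ref{thm:interval-shrinking} into Lemma~\ref{lem:bifurcate}, and balance $n^{2\alpha}/L^{\alpha}$ against $L^{1/2}n$ to get $L=n^{2/7}$ (hence $\OO(n^{8/7})$) in~(a) and $L=n^{2/5}$ (hence $\OOO(n^{6/5})$) in~(b). The only point the paper treats with more care than you do is the 1.5-dimensional terrain case: there the naive visibility distance $\kkappa(p,q)$ is not a constant-complexity predicate, so the paper (in an appendix) redefines $\kkappa$ using Katz et al.'s critical rays before invoking halfplane range searching to get $\alpha=2/3$; your blanket appeal to ``already built in the cited works'' glosses over this small but necessary step.
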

\begin{proof}
   As mentioned in the introduction, all these problems have known decision algorithms running in $D(n)=\OO(n)$ time \cite{AvrahamFKKS15,KaplanKSS23,KatzSS24,AgarwalKKS24}.  In all these decision algorithms, critical values are indeed obtained from some function $\kkappa$ over pairs of input objects.

   For all the applications in (a), the function $\kkappa$ satisfies $\alpha=2/3$.
  (Some care is needed for reverse shortest path in visibility graphs in 1.5-dimensional terrains; see Appendix~\ref{sec:terrain}.)
  
   Combining Theorem~\ref{thm:interval-shrinking} with
   Lemma~\ref{lem:bifurcate}, we get total running time
   \[ \OO\left( \frac{n^{4/3}}{L^{1/3}} + L^{1/2} n\right). \]
   Choosing $L=n^{2/7}$ yields $\OO(n^{8/7})$.
     
   For the applications in (b), we have $\alpha=3/4+o(1)$~\cite{KaplanKSS23}.  The total running time becomes 
   \[ \OOO\left( \frac{n^{3/2}}{L^{3/4}} + L^{1/2} n\right). \]
   Choosing $L=n^{2/5}$ yields $\OOO(n^{6/5})$.
\end{proof}

\section{Reverse Shortest Path in Unweighted Unit-Disk Graphs}
\label{sec:rsp-uudg}

For reverse shortest path in unweighted unit-disk graphs, we  show how to improve the running time even further. 

In this section, $\kkappa(p,q)$ denotes the Euclidean
distance between points $p$ and $q$, and
$\kkappa(P,Q) := \{\kkappa(p,q)\mid (p,q)\in P\times Q\}$ for two point sets $P$ and $Q$.  Let $d_G(s,t)$ denote the shortest-path distance between $s$ and $t$ in a graph~$G$.

We are given an input set $P$ of $n$ points in the plane, $s,t\in P$, and a number $\lambda$.  Let $G_r(P)$ be the (unweighted) graph with vertex set $P$, where we put an edge between $p$ and $q$ iff $\kkappa(p,q)\le r$.
We want to compute the smallest $r^*$ for which
$d_{r^*}(s,t)\le \lambda$, where we use $d_r(s,t)$ as a shorthand for $d_{G_r(P)}(s,t)$.

\subsection{Recap of Chan and Skrepetos' Decision Algorithm}

Our algorithms rely on the details of the known $\OO(n)$-time decision algorithm by Chan and Skrepetos~\cite{ChanS16}, so we will first give a quick recap of their algorithm.

Given an input value $r$, we want to compute $d_r(s,t)$.
We first create a uniform grid $\Psi_r$ of side length $r/\sqrt{2}$.
Chan and Skrepetos' algorithm is a careful implementation of BFS using the grid.
We start from $s$ and generate each level of the BFS tree (the ``frontier'') one at a time.
The key observation is that each grid cell only needs to be visited a constant number of times during the BFS, because (i)~points in the same cell
have the roughly same shortest-path distance $\pm 1$ from $s$ in $G_r(P)$ and so each cell ``appears'' in  only $O(1)$ levels of the BFS tree, and (ii)~each cell
has only $O(1)$ neighboring cells---we say that two cells are \emph{neighboring} if the minimum Euclidean distance between the two cells is at most $r$.
Given one level of the BFS tree, we can generate the next level (i.e., ``expand'' the frontier) by
solving the following subproblems between cells in the current level and their neighboring cells:


\begin{subproblem}
    \label{prb:frontier}
    Given an input value $r$, a subset of $n_r$ red points in one grid cell $\sigma$ and a subset of $n_b$ blue points in a neighboring grid cell $\sigma'$, determine for each blue point whether there is a red point at distance at most $r$ from it.
\end{subproblem}

Chan and Skrepetos solved the above subproblem in $O(n_r + n_b)$ time, assuming that the input points are pre-sorted, by using envelopes of pseudo-lines.
As a result, the total running time of their algorithm is linear after pre-sorting.

The above algorithm is not efficiently parallelizable (since the number of levels in the BFS may be large).  Instead, we can apply the bifurcation technique to simulate the algorithm on~$r^*$.  However, one issue arises: the envelope computations requires critical values defined by triples of input elements, not pairs.
As noted in other papers~\cite{KaplanKSS23,WangZ23},
this can be fixed by switching to a different solution to Subproblem~\ref{prb:frontier} which is slightly slower by a logarithmic factor but is simpler: namely, we compute the nearest red neighbor to each blue point by point location in the Voronoi diagram of the red points in $O((n_r+n_b)\log (n_r+n_b))$ time~\cite{BergCKO08};
then for each blue point, we just compare $r$ with its nearest neighbor distance.  This way, the critical values are all Euclidean distances of pairs of input points.

Another issue is the construction of the grid $\Psi_r$, i.e., the assignment of input points to grid cells in $\Psi_r$.  This part is easily parallelizable, and so we can apply standard parametric search to construct the grid $\Psi_{r^*}$ in $\OO(n)$ time, all done as preprocessing.  (In fact, the time bound for constructing $\Psi_{r^*}$ is $O(n\log n)$, as shown by Wang and Zhao~\cite{WangZ23}.)





\subsection{Heavy vs.\ Light Cells}

Applying the shrink-and-bifurcate technique to the preceding decision algorithm using our improved interval shrinking method yields an overall time bound of $\OO(n^{8/7})$, as we have already shown in Theorem~\ref{thm:appl}(a).
To obtain our best result, we will combine with another idea that was used in
Wang and Zhao's original paper~\cite{WangZ23}: heavy vs.\ light cells.

Recall that we have already constructed the grid $\Psi_{r^*}$.
Let $\Delta$ be a parameter to be chosen later.
Call a grid cell $\sigma$ {\em heavy} if $|P \cap \sigma| \geq \Delta$, or {\em light} otherwise. The number of heavy cells is clearly at most $O(n/\Delta)$. Furthermore, a pair of neighboring cells $(\sigma, \sigma')$ is called a {\em light pair} if one of $\sigma$ and $\sigma'$ is a light cell (the other may be light or heavy).

Our new strategy is as follows. First, we perform interval shrinking, but only with respect to distances of points from light pairs.
Intuitively, there are fewer such distances and so interval shrinking should be doable faster.  This allows us to simulate Chan and Skrepetos' algorithm at $r^*$ on a modified graph where heavy cells are eliminated---or more precisely, points in a common heavy cell are contracted into a single vertex.
When we ``uncontract'' the heavy cells, this causes an $O(n/\Delta)$ additive error to all the shortest-path distances in $G_{r^*}(P)$.
As a final step, we show how to use a dynamic program to recover the value $r^*$.
In the next three subsections, we provide the details of all these steps.

\subsection{Faster Interval Shrinking for Light Pairs}

\newcommand{\dlight}{\delta_{\textrm{light}}}

In the following, we show how to solve the interval shrinking problem faster for distances from light pairs.  More precisely, 
define $\dlight(p,q)=\kkappa(p,q)$ if $p\in\sigma$ and $q\in\sigma'$ for some light pair $(\sigma,\sigma')$,
and $\dlight(p,q)=\infty$ otherwise.
We apply Algorithm~\ref{alg:interval-shrinking-2} to do interval shrinking with respect to $\dlight$.

To analyze the running time, we need an efficient implementation of the selection oracle for 
\[ \dlight(\widehat{P}_i,\widehat{Q}_i)\ = 
           \bigcup_{\textrm{light pair }(\sigma, \sigma')}
            \kkappa(\widehat{P}_i\cap \sigma, \widehat{Q}_i\cap \sigma') \ \cup\ \{\infty\}.
\]






Let $n_\sigma := |P\cap \sigma|$ for each cell $\sigma$.
Note that $|\widehat{P}_i\cap\sigma|=\OO(\frac{n_\sigma}{\ceil{L/2^i}})$
and $|\widehat{Q}_i\cap\sigma'|=\OO(\frac{n_{\sigma'}}{2^i})$ for every $(\sigma,\sigma')$ w.h.p.
We already have a selection oracle for each $\delta(\widehat{P}_i\cap\sigma,\widehat{Q}_i\cap\sigma')$
that w.h.p.\ runs in time
\[
\OO\left( \left(\frac{n_\sigma}{\ceil{L/2^i}}\right)^{2/3}
            \left(\frac{n_{\sigma'}}{2^i}\right)^{2/3} + n_\sigma+n_{\sigma'}\right)
            \ =\ \OO\left(\frac{n_\sigma^{2/3}n_{\sigma'}^{2/3}}{L^{2/3}} + n_\sigma+n_{\sigma'}\right).
\]

To obtain a selection oracle for $\dlight(\widehat{P}_i,\widehat{Q}_i)$,
we face a problem analogous to selection in a union of multiple sorted arrays, except that the arrays are implicitly given.
Frederickson and Johnson~\cite{FredericksonJ82} gave an algorithm for selection in $k$ sorted arrays of size $n$ running in $O(k\log n)$ time; the algorithm proceeds in $O(\log n)$ rounds, and in each round, it looks up only $O(1)$ elements in each array.  In our scenario, each array lookup corresponds to a call to the selection oracle for $\delta(\widehat{P}_i\cap\sigma,\widehat{Q}_i\cap\sigma')$ for a light pair $(\sigma,\sigma')$.  Thus, by Frederickson and Johnson's algorithm, selection in the union $\dlight(\widehat{P}_i,\widehat{Q}_i)$ can be done in total time
\begin{eqnarray*} 
\lefteqn{\OO\left( \sum_{\textrm{light pair }(\sigma, \sigma')} \left(\frac{n_\sigma^{2/3}n_{\sigma'}^{2/3}}{L^{2/3}} + n_\sigma+n_{\sigma'}\right)  \right)}\\
&\le& \OO\left( \sum_{\textrm{light pair }(\sigma, \sigma')} \frac{\Delta^{1/3}(n_\sigma+n_{\sigma'})}{L^{2/3}} + n_\sigma+n_{\sigma'}  \right)
\ =\ \OO\left(\frac{\Delta^{1/3} n}{L^{2/3}} + n \right),
\end{eqnarray*}
since  $\min\{n_\sigma,n_{\sigma'}\}\le \Delta$ for each light pair $(\sigma,\sigma')$, and
$\sum_{\textrm{neighboring pair }(\sigma,\sigma')} (n_\sigma+n_{\sigma'})=O(n)$.
Therefore, the total time for interval shrinking with respect to $\dlight$
is $\OO(\Delta^{1/3}n/L^{2/3}+D(n))$ w.h.p.

\subsection{Computing Distances with $O(n/\Delta)$ Additive Error}

Let $H_r$ be the graph obtained from $G_r(P)$ by contracting the points inside each heavy cell of $\Psi_r$ into a single vertex. 
Since there are $O(n/\Delta)$ contracted vertices in $H_r$, we have $d_{r}(s, p) - k  \leq d_{H_r}(s, p) \leq d_{r}(s, p)$ with $k=O(n/\Delta)$, for every $p\in P$.

It is straightforward to modify Chan and Skrepetos' decision algorithm to do
BFS on this modified graph $H_r$.  Expanding the frontier again reduces to
solving Subproblem~\ref{prb:frontier} for pairs of neighboring cells.
Note that for two neighboring cells that are both heavy, we can determine
whether there is an edge between them by pre-computing the bichromatic closest pair between the points in the neighboring cells (via Voronoi diagrams) and testing $r$ against this closest pair distance---call this a \emph{special} distance.
This pre-computation takes $O(n\log n)$ total time.

After performing interval shrinking as described in previous subsection,
we apply the bifurcation technique (Lemma~\ref{lem:bifurcate}) to simulate BFS on $H_{r^*}$.  The critical values indeed are all Euclidean distances from light pairs only, except for the $O(n)$ special distances above,
but initially before the simulation starts, we can resolve the comparisons with all $O(n)$ special distances by binary search using $O(\log n)$ calls to the decision algorithm in $\OO(n)$ time. 

At the end of the simulation, we will then have computed $d_{H_{r^*}}(s,p)$ for all $p\in P$, which yield estimates to $d_{r^*}(s,p)$ with additive error $O(n/\Delta)$. To summarize:





\begin{lemma}
    \label{lem:approx-search}
    In time $\OO(\Delta^{1/3} n / L^{2/3} + L^{1/2} n)$, one can compute for every $p \in P$ an estimate $\dd_p$ with $d_{r^*}(s, p) - k \leq \dd_p \leq d_{r^*}(s, p)$, where $k = O(n / \Delta)$.
\end{lemma}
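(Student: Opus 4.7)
The plan is to stitch together the three ingredients already built up in this section: the interval-shrinking routine applied to $\dlight$, a short preprocessing on heavy--heavy special distances, and the bifurcation lemma.

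First, I would apply the faster interval-shrinking routine from the previous subsection verbatim. Its analysis already gives, in $\OO(\Delta^{1/3}n/L^{2/3}+D(n))=\OO(\Delta^{1/3}n/L^{2/3}+n)$ time w.h.p., an interval $(r^-,r^+]$ containing $r^*$ with at most $\OO(L)$ values $\dlight(p,q)$ in it. Next, I would preprocess the $O(n/\Delta)$ heavy--heavy neighboring pairs by computing, for each such pair $(\sigma,\sigma')$, the bichromatic closest-pair distance via a Voronoi diagram in $O((n_\sigma+n_{\sigma'})\log n)$ time, for an $O(n\log n)$ total. A single binary search on these $O(n/\Delta)=O(n)$ special distances against $r^*$, using $O(\log n)$ calls to the $\OO(n)$-time decision oracle, resolves every special distance against $r^*$ in $\OO(n)$ additional time.

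With that preprocessing done, I would invoke Lemma~\ref{lem:bifurcate} on the modified Chan--Skrepetos decision algorithm that runs BFS on $H_{r^*}$, using Voronoi point-location (rather than envelopes) for Subproblem~\ref{prb:frontier} so that every critical value is a Euclidean distance between two input points. Each heavy--heavy neighboring pair contributes only its one special distance, and those are already resolved by preprocessing, so inside $(r^-,r^+]$ every remaining critical value is a distance from a light pair, i.e., a value of $\dlight$, of which there are at most $\OO(L)$. Bifurcation therefore simulates the algorithm on $r^*$ in $\OO(L^{1/2}D(n))=\OO(L^{1/2}n)$ time and delivers $d_{H_{r^*}}(s,p)$ for every $p\in P$. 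Setting $\dd_p:=d_{H_{r^*}}(s,p)$ and applying the sandwich $d_{r^*}(s,p)-k\le d_{H_{r^*}}(s,p)\le d_{r^*}(s,p)$ with $k=O(n/\Delta)$ that was already justified (any shortest path in $H_{r^*}$ passes through $O(n/\Delta)$ contracted vertices, each of which re-expands to a 1-hop detour in $G_{r^*}(P)$) finishes the proof.

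The one place that needs care is the critical-value bookkeeping. With the envelope-based solution to Subproblem~\ref{prb:frontier}, critical values would be defined by triples and would not fit into the $\dlight$ family; with Voronoi point-location used naively, heavy--heavy pairs would contribute closest-pair distances that are also not $\dlight$-values. Swapping in Voronoi point-location and explicitly pre-resolving the $O(n/\Delta)$ heavy--heavy special distances against $r^*$ is exactly the fix that makes every surviving in-interval critical value a genuine light-pair distance, so that the $\OO(L)$ guarantee from interval shrinking unlocks the $\OO(L^{1/2}n)$ bifurcation bound. Adding the two dominant costs yields the claimed $\OO(\Delta^{1/3}n/L^{2/3}+L^{1/2}n)$.
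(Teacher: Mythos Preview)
Your proposal is correct and follows essentially the same approach as the paper: perform the improved interval shrinking on $\dlight$, pre-resolve the special heavy--heavy closest-pair distances by binary search, then bifurcate on the Voronoi-based Chan--Skrepetos BFS over $H_{r^*}$ and output $\dd_p=d_{H_{r^*}}(s,p)$. Your remark that there are only $O(n/\Delta)$ heavy--heavy neighboring pairs is in fact a slightly tighter count than the paper's ``$O(n)$ special distances,'' but this does not affect the bound.
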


\subsection{Recovering $r^*$}

Finally, we will compute $r^*$ using the estimates obtained in Lemma~\ref{lem:approx-search}. This final step is done via a dynamic program, and does not require any knowledge of the details of the earlier steps.

\begin{lemma}
    \label{lem:correction}
 Given a number $k$ and estimates $\dd_p$ with $d_{r^*}(s, p) - k \leq \dd_p \leq d_{r^*}(s, p)$ for all $p \in P$, one can compute $r^*$ in $\OO(kn)$ time.
\end{lemma}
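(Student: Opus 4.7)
The plan is to recover $r^*$ via a dynamic program of size $O(nk)$. Define
\[
f(p, j)\ :=\ \min\{\, r : G_r(P)\text{ contains an } s\text{-to-}p \text{ path with at most } j \text{ edges}\,\},
\]
so that $r^* = f(t, \lambda)$ and $f$ satisfies the standard bottleneck recurrence $f(p, j) = \min_{q \in P} \max(f(q, j-1),\, \kkappa(q, p))$. The first step is to argue that it suffices to compute $f(p, j)$ only for $j$ in the window $[\dd_p, \dd_p + k]$. Indeed, $r^*$ is realized by a BFS path $s = v_0, v_1, \ldots, v_{j_t} = t$ in $G_{r^*}$ with $j_t := d_{r^*}(s, t) \le \lambda$ edges and bottleneck exactly $r^*$ (a smaller bottleneck would contradict the minimality of $r^*$); along it $d_{r^*}(s, v_i) = i$, so by the estimate's accuracy $i \in [\dd_{v_i}, \dd_{v_i} + k]$. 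Letting $\tilde f$ denote the restricted DP that coincides with $f$ on these windows and equals $+\infty$ outside, the existence of this BFS path together with the monotonicity obtained from the ``stay'' option $q=p$ in the recurrence yields $\tilde f(t, j^*) \le r^*$ for $j^* := \min(\lambda, \dd_t + k) \in [j_t, \lambda]$; conversely $\tilde f \ge f$ gives $\tilde f(t, j^*) \ge f(t, j^*) = r^*$, so $\tilde f(t, j^*) = r^*$ exactly.

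The main obstacle is evaluating the $O(nk)$ restricted entries within an $\OO(nk)$ budget, since the naive recurrence costs $\Theta(n)$ per entry. This is resolved by a weighted nearest-neighbor structure processed level by level. At level $j$, collect the live sources $Q_j := \{q \in P : j - 1 \in [\dd_q, \dd_q + k]\}$ with weights $w_q := \tilde f(q, j-1)$, and construct the lower envelope of the ``cone'' functions $p \mapsto \max(w_q, \kkappa(q, p))$ over $q \in Q_j$. This envelope is essentially a variant of the additively weighted Voronoi diagram (or the Voronoi diagram of disks): it has linear complexity and admits $O(|Q_j| \log |Q_j|)$-time construction and $O(\log |Q_j|)$ point-location queries, using standard techniques for 3D lower envelopes of surfaces of constant description complexity. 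A single query at each target $p$ with $j \in [\dd_p, \dd_p + k]$ then evaluates the recurrence at $(p, j)$.

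Since each $q \in P$ is live for only $O(k)$ consecutive levels, $\sum_j |Q_j| = O(nk)$, and similarly the total query count is $O(nk)$. Summing costs across all levels gives total time $O(nk \log n) = \OO(nk)$. The technical point that needs careful verification in a full proof is the cone-envelope data structure, but this is standard material on weighted Voronoi diagrams, and everything else is bookkeeping around the bounded-window DP.
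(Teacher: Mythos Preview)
Your proposal is correct and follows essentially the same approach as the paper: restrict the bottleneck DP to the $O(k)$-wide window $[\dd_p,\dd_p+k]$ for each $p$ (the paper writes this as $p\in P_i$ iff $i-k\le\dd_p\le i$), prove correctness by tracing a BFS shortest path in $G_{r^*}$, and evaluate each DP layer via a weighted nearest-neighbor query of the form $\min_q\max\{w_q,\kkappa(q,p)\}$. The only notable difference is in how that query is handled: the paper does \emph{not} build the minimization diagram directly but instead uses a simple weight-sorted recursion with ordinary Voronoi diagrams at each level (Lemma~\ref{lem:nnq}), giving $O(\log^2 n)$ queries after $O(n\log^2 n)$ preprocessing. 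Your appeal to ``standard techniques for 3D lower envelopes of surfaces of constant description complexity'' is not the right citation---those give $O(n^{2+\varepsilon})$ in general---though your parenthetical identification with the additively weighted Voronoi diagram of disks (via $\max(w_q,\kkappa(q,p))=w_q+\max(0,\kkappa(q,p)-w_q)$) is on target and would yield linear complexity; either way, the paper's more elementary data structure suffices for the $\OO(kn)$ bound.
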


To prove Lemma~\ref{lem:correction}, it is helpful to view the problem
as computing a \emph{minimum bottleneck path} from $s$ to $t$ with at most $\lambda$ links (as mentioned in the introduction),
i.e., a path from $s$ to $t$ with at most $\lambda$ links that minimizes the maximum Euclidean
length of the edges along the path.

There is a straightforward dynamic programming solution to this problem:
Let $D_i(p)$ denote the minimum bottleneck value over all paths from $s$ to $p$ with $i$ links.
We want $r^*=\min_{i\le\lambda} D_i(t)$.
We have the following recursive formula: for each $p\in P$,
\begin{align}
    \label{eqn:correction}
    D_i(p) = \min_{q\in P} \max \{ D_{i-1}(q), \kkappa(p,q) \},
\end{align}
with $D_0(s)=0$ and $D_0(p)=\infty$ for all $p\in P-\{s\}$ as base cases.
Unfortunately, in this dynamic program, the number of table entries is $O(n\lambda)$, which is $O(n^2)$ in the worst case.

We use the given estimates $\dd_p$ to speed up the dynamic program.
The main observation is that it suffices to generate $D_i(p)$ when  $i-k\le\dd_p\le i$.
The number of such table entries is thus reduced to $O(kn)$.

\newcommand{\Dhat}{\widehat{D}}
More precisely, define a subset $P_i=\{p\in P: i-k\le\dd_p\le i\}$ for each $i$.
Since a point belongs to $O(k)$ of these subsets, $\sum_i |P_i|=O(kn)$.
For each $p\in P_i$, define 
\begin{align}
    \label{eqn:correction:new}
    \Dhat_i(p) = \min_{q\in P_{i-1}} \max \{ \Dhat_{i-1}(q), \kkappa(p,q) \},
\end{align}
with $\Dhat_0(s)=0$. (All unspecified $\Dhat_i(p)$ values are implicitly set to $\infty$.)

We claim that $r^*=\min_{i\le\lambda}\Dhat_i(t)$.

\subparagraph*{Correctness.}
It is obvious that $\Dhat_i(p)\ge D_i(p)$ for all $p$ and $i$.  Thus, $\min_{i\le\lambda}\Dhat_i(t)\ge 
\min_{i\le\lambda}D_i(t)=r^*$.

For the reverse direction, take a shortest path $\langle u_0, u_1, \ldots, u_i\rangle$ from $s$ to $t$ in $G_{r^*}(P)$,
with $u_0=s$, $u_i=t$, and $i\le \lambda$.
Note that $d_{r^*}(s,u_j)=j$ for every $j=0,\ldots,i$.
So, $j-k\le \dd_{u_j}\le j$, i.e., $u_j\in P_j$ for every $j=0,\ldots,i$.
It follow from (\ref{eqn:correction:new}) that 
$\Dhat_0(u_0)=0$,
$\Dhat_1(u_1)\le r^*$,
$\Dhat_2(u_2)\le r^*$, \ldots, $\Dhat_i(u_i)\le r^*$
(since $\kkappa(u_0,u_1),\ldots,\kkappa(u_{i-1},u_i)\le r^*$).
Thus, $\min_{i\le\lambda}\Dhat_i(t)\le r^*$.

\subparagraph*{Running time.}
For an efficient implementation of the dynamic program, we need a data structure for the following variant of nearest neighbor queries:

\begin{lemma}\label{lem:nnq}
Given a set $Q$ of $n$ points in the plane where each point $q$ has a weight $w_q$,
we can preprocess $Q$ in $O(n\log^2n)$ time so that given any query point $p$,
we can find $q\in Q$ minimizing $\max\{\kkappa(p,q),w_q\}$ in $O(\log^2n)$ time,
where $\kkappa(\cdot,\cdot)$ denotes Euclidean distance.
\end{lemma}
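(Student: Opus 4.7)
The plan is to reduce the query to a family of prefix nearest-neighbor queries in the weight-sorted order and answer them with a balanced binary tree of precomputed Voronoi diagrams.

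Sort $Q$ by weight so that $w_1 \le w_2 \le \cdots \le w_n$ (writing $w_i$ for $w_{q_i}$), and set $d_i(p) := \min_{j \le i}\kkappa(p,q_j)$. A short exchange argument gives
\[
   \min_{q \in Q}\max\{\kkappa(p,q), w_q\} \;=\; \min_{1\le i\le n}\max\{d_i(p),\, w_i\}.
\]
For $\le$, a witness $q_j$ of $d_i(p)$ has $w_j\le w_i$; for $\ge$, the overall minimizer $q^*$ with index $j^*$ satisfies $d_{j^*}(p)\le \kkappa(p,q^*)$. Since $d_i(p)$ is non-increasing in $i$ while $w_i$ is non-decreasing, the sequence $\max\{d_i(p), w_i\}$ is unimodal, and its minimum is attained at the transition index $i^* := \min\{i : d_i(p)\le w_i\}$ (set $i^* = n+1$ if no such $i$ exists); the optimum value is then $\min\{d_{i^*-1}(p),\,w_{i^*}\}$ (with conventions $d_0 = w_{n+1}= \infty$).

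For preprocessing, build a balanced binary tree $T$ over the sorted sequence of points; at each internal node $v$ whose left subtree holds the point set $L_v$, store the Voronoi diagram of $L_v$ with an $O(\log n)$-time point-location structure, built in $O(|L_v|\log|L_v|)$ time. Since each point lies in $O(\log n)$ such subsets, total preprocessing is $O(n\log^2 n)$. To answer a query for $p$, walk down $T$ from the root while maintaining a running distance $\rho$ (together with its witness point), where $\rho = d_{a_v-1}(p)$ and $[a_v, b_v]$ is the index range at the current node $v$. At each internal node $v$ with median index $m_v$, retrieve $d_L :=$ NN distance from $p$ to $L_v$ via one point-location query in $O(\log n)$ time; then $\min(\rho, d_L)= d_{m_v}(p)$, and by monotonicity the condition $d_{m_v}(p)\le w_{m_v}$ is equivalent to $i^*\le m_v$. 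If it holds, descend left (leaving $\rho$ unchanged); otherwise, descend right after updating $\rho \leftarrow \min(\rho, d_L)$ and its witness. After $O(\log n)$ descents we reach a leaf $\ell$; one extra comparison with $q_\ell$ disambiguates $i^*\in\{\ell, \ell+1\}$, and the desired minimizer $q$ is the recorded witness of $\min\{\rho,\, w_{i^*}\}$ (using $q_n$ if $i^* = n+1$). Total query time is $O(\log n)$ nodes times $O(\log n)$ per sub-query, i.e.\ $O(\log^2 n)$.

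The main subtlety is the equivalence $d_{m_v}(p)\le w_{m_v} \iff i^*\le m_v$ used to guide the descent; it is immediate from the two monotonicities but must be invoked at every node. Boundary cases ($i^* = 1$ giving $r^* = w_1$, or no crossover giving $r^* = d_n(p)$) are absorbed by the final leaf check, and propagating the witness through the $\min$-updates is routine, so I do not expect any significant obstacle beyond these bookkeeping details.
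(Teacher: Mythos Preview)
Your proposal is correct and follows essentially the same approach as the paper: sort by weight, build a balanced binary tree storing a Voronoi diagram of each node's left half, and answer a query by descending the tree with one $O(\log n)$ nearest-neighbor lookup per level. The only cosmetic difference is the branching test---you compare the prefix nearest-neighbor distance $d_{m_v}(p)$ to the median weight $w_{m_v}$ to binary-search for the crossover index~$i^*$, whereas the paper compares $\kkappa(p,q_L)$ to the weight $w_{q_L}$ of the actual nearest neighbor in the left half and argues directly which half contains the optimum; both criteria are valid and yield the identical $O(n\log^2 n)$ preprocessing and $O(\log^2 n)$ query bounds.
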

\begin{proof}
We apply standard multi-level geometric data structuring techniques~\cite{AgarwalE99}:
Order $Q$ by increasing weights.  Split $Q$ into the left half $Q_L$ and the right half $Q_R$.
Store a point location structure for the Voronoi diagram of $Q_L$~\cite{BergCKO08}.
Recursively build data structures for $Q_L$ and for $Q_R$.
The preprocessing time satisfies the recurrence $P(n)=2P(n/2)+O(n\log n)$.

Given a query point $p$,
we first find its nearest neighbor $q_L$ in $Q_L$ in $O(\log n)$ time.
\begin{itemize}
\item Case 1: $\kkappa(p,q_L) \le w_{q_L}$.
Then  $\min_{q\in Q_R}\max\{\kkappa(p,q),w_q\}\ge w_{q_L}=\max\{\kkappa(p,q_L),w_{q_L}\}$,
and so it suffices to recurse in $Q_L$.
\item Case 2: $\kkappa(p,q_L) > w_{q_L}$.
Then $\min_{q\in Q_L}\max\{\kkappa(p,q),w_q\} = \kkappa(p,q_L)$,
and so it suffices to recurse in $Q_R$.
\end{itemize}
The query time satisfies the recurrence $Q(n)=Q(n/2)+O(\log n)$, and thus we have $Q(n) = O(\log^2 n)$.
\end{proof}

Having computed $\Dhat_{i-1}(\cdot)$, we can evaluate $\Dhat_i(\cdot)$ according to (\ref{eqn:correction:new}) by preprocessing $P_{i-1}$ in the data structure from Lemma~\ref{lem:nnq} (with weights $w_q:=\Dhat_{i-1}(q)$)
and answering $|P_i|$ queries.
The running time is $\OO(|P_{i-1}|+|P_i|)$.
Summing over all $i$ yields 
total running time $\OO(kn)$.  This completes the proof of Lemma~\ref{lem:correction}.

\begin{theorem}
    The reverse shortest path problem in (unweighted) unit-disk graphs can be solved in $\OO(n^{9/8})$ time w.h.p.\ by a (Las Vegas) randomized algorithm.
\end{theorem}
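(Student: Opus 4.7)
The plan is to combine the three technical ingredients the paper has just built: the faster interval shrinking for light pairs, the bifurcation-based computation of approximate BFS distances on the contracted graph $H_{r^*}$, and the correction dynamic program. All three are already encapsulated in Lemma~\ref{lem:approx-search} and Lemma~\ref{lem:correction}; the only remaining work is to choose the two parameters $L$ (the size of the shrunk interval) and $\Delta$ (the heavy/light threshold) optimally.

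First, I would invoke the known parametric-search construction of $\Psi_{r^*}$ (already noted in the recap of Chan and Skrepetos' algorithm) as preprocessing in $\tilde{O}(n)$ time, so that the heavy/light classification at the unknown optimum $r^*$ is available. Next I would apply Lemma~\ref{lem:approx-search} with the decision time $D(n)=\tilde{O}(n)$, producing for every $p\in P$ an estimate $\dd_p$ with additive error $k=O(n/\Delta)$ in time
\[
\tilde{O}\!\left(\frac{\Delta^{1/3} n}{L^{2/3}} + L^{1/2}\,n\right).
\]
Finally I would feed these estimates into Lemma~\ref{lem:correction} to recover $r^*$ exactly in time $\tilde{O}(kn)=\tilde{O}(n^{2}/\Delta)$.

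Adding the three terms, the total running time is
\[
\tilde{O}\!\left(\frac{\Delta^{1/3} n}{L^{2/3}} \;+\; L^{1/2}\,n \;+\; \frac{n^{2}}{\Delta}\right).
\]
Balancing the second and third terms forces $\Delta = n/L^{1/2}$; plugging this into the first term and balancing it against $L^{1/2}n$ gives $L^{7/2}=\Delta=n/L^{1/2}$, i.e.\ $L=n^{1/4}$, and hence $\Delta=n^{7/8}$ and $k=n^{1/8}$. A direct check shows all three terms equal $\tilde{O}(n^{9/8})$, which yields the claimed bound. Since every randomized step (interval shrinking and bifurcation) is Las Vegas and succeeds w.h.p., and since the correction dynamic program is deterministic, the overall algorithm is Las Vegas and runs in $\tilde{O}(n^{9/8})$ time w.h.p.

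There is no real obstacle beyond the parameter balancing: the hard conceptual work has been done in Section~\ref{sec:rsp-uudg}, namely (i) restricting interval shrinking to light-pair distances so that the $\Delta^{1/3}$ factor appears instead of a full $n$, and (ii) recovering an exact answer from approximate distances via the $P_i$-restricted dynamic program of Lemma~\ref{lem:correction}. The only care needed in writing the final proof is to verify that the critical values handled by bifurcation are indeed limited to $\dlight$-distances plus the $O(n)$ special bichromatic-closest-pair distances already resolved up front, so that the improved interval shrinking bound actually governs the running time of Lemma~\ref{lem:bifurcate}; this was addressed in the previous subsection and can be cited directly.
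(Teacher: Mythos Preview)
Your proposal is correct and matches the paper's own proof essentially line for line: combine Lemma~\ref{lem:approx-search} and Lemma~\ref{lem:correction} to get total time $\OO(\Delta^{1/3} n/L^{2/3} + L^{1/2} n + n^2/\Delta)$, then set $L=n^{1/4}$ and $\Delta=n^{7/8}$ to make all three terms $\OO(n^{9/8})$. The extra remarks you include about the $\Psi_{r^*}$ preprocessing, the special bichromatic-closest-pair distances, and the Las Vegas nature are accurate and consistent with the preceding subsections.
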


\begin{proof}
    By combining Lemma~\ref{lem:approx-search} and Lemma~\ref{lem:correction}, we obtain an algorithm that runs in time \begin{align*}
    \OO\left(
        \frac{\Delta^{1/3} n}{L^{2/3}} +  L^{1/2}n + \frac{n^2}{\Delta}
    \right).
\end{align*}
Choosing $L = n^{1/4}$ and $\Delta = n^{7/8}$ yields the desired result.
\end{proof}

\bibliographystyle{plainurl}
\bibliography{reference}

\appendix
\section{Reverse Shortest Path in Visibility Graphs in 1.5-Dimensional Terrains}
\label{sec:terrain}

\begin{figure}
    \centering
    \includegraphics[width=0.8\linewidth]{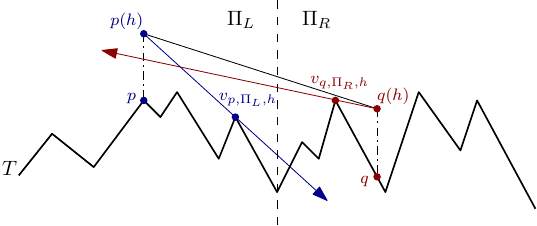}
    \caption{Consider two vertical slabs $\Pi_L$ and $\Pi_R$ divided by a vertical line, two points $p$ and $q$ on $T$ lying in $\Pi_L$ and $\Pi_R$ respectively, and a height $h$. This figure shows the critical rays $\overrightarrow{p(h)v_{p,\Pi_L,h}}$ in blue and $\overrightarrow{q(h)v_{q,\Pi_R,h}}$ in red. In this figure, both critical rays are below the segment $\overline{p(h)q(h)}$, so $p(h)$ and $q(h)$ see each other.}
    \label{fig:terrain}
\end{figure}

In this appendix, we briefly elaborate on the application to reverse shortest path in visibility
graphs in 1.5-dimensional terrains, and justify why $\alpha=2/3$ is attainable as claimed
in the proof of Theorem~\ref{thm:appl}.

Let $h^*$ denote the optimal height.  Let $p(h)$ denote the point $p+(0,h)$.
For two points $p,q\in P$, one could define $\kkappa(p,q)$ as the smallest $h$ such that
$p(h)$ and $q(h)$ are visible.
In Katz \etal's $\OO(n)$-time decision algorithm~\cite{KatzSS24},
the critical values are indeed among such values $\kkappa(p,q)$.
However, the condition that $p(h)$ and $q(h)$ are visible is not a constant-complexity predicate,
being dependent on the terrain $T$, which makes the implementation of the selection oracle more difficult.

To remedy this issue, we recall Katz \etal's algorithm, which uses divide-and-conquer.
The recursion generates a binary tree of vertical slabs such that the tree has $O(\log n)$ height.
For two input points $p$ and $q$ in sibling slabs $\Pi_L$ and $\Pi_R$ respectively, one can find vertices $v_{p,\Pi_L,h}$
and $v_{q,\Pi_R,h}$ with the following property: $p(h)$ and $q(h)$ are visible iff the rays $\overrightarrow{p(h)v_{p,\Pi_L,h}}$
and $\overrightarrow{q(h)v_{q,\Pi_R,h}}$ are both below $\overline{p(h)q(h)}$.
(These rays are called \emph{critical rays} in their paper.) See Figure~\ref{fig:terrain} for an illustration.

Katz \etal~\cite{KatzSS24} described a ``preliminary stage'', which in $\OO(n)$ time determines all these vertices
$v_{p,\Pi_L,h^*}$ and $v_{q,\Pi_R,h^*}$ at the unknown $h^*$ (by using standard parametric search).

Now, we can redefine $\kkappa((p,\Pi_L),(q,\Pi_R))$ as the smallest $h$ such that
the rays $\overrightarrow{p(h)v_{p,\Pi_L,h^*}}$
and $\overrightarrow{q(h)v_{q,\Pi_R,h^*}}$ are both below $\overline{p(h)q(h)}$
(i.e., $q(h)$ is above the line $\overleftrightarrow{p(h)v_{p,\Pi_L,h^*}}$
and $p(h)$ is above the line $\overleftrightarrow{q(h)v_{q,\Pi_R,h^*}}$).
There are $O(n\log n)$ ``objects'' $(p,\Pi_L)$ and $(q,\Pi_R)$ in total (since a point lies in $O(\log n)$ slabs in the tree).
For this function $\kkappa$, one can then implement the selection oracle by known geometric range searching techniques,
i.e., first using standard parametric search to reduce selection to counting, and then using multi-level data structures
involving halfplane range searching in the plane (e.g., \cite{AgarwalE99,ChanZ24}).
This yields an $\OO(|\widehat{P}|^{2/3}|\widehat{Q}|^{2/3} + |\widehat{P}|+|\widehat{Q}|)$ time bound for selection in $\kkappa(\widehat{P},\widehat{Q})$ for any two subsets of ``objects'' $\widehat{P}$ and $\widehat{Q}$.  Thus, $\alpha=2/3$ indeed holds in this application.

\end{document}